\documentclass[10pt, doublecolumn]{IEEEtran}
\usepackage{epsfig,latexsym}
\usepackage{float}
\usepackage{indentfirst}
\usepackage{amsmath}
\usepackage{bm}
\usepackage{amssymb}
\usepackage{times}
\usepackage{subfigure}
\usepackage{psfrag}
\usepackage{hyperref}
\usepackage{cite}
\usepackage{lastpage}
%\linespread{1.5}
\usepackage{fancyhdr}
\usepackage{color}
 \usepackage{amsthm}
\usepackage{bigints}
\sloppy

\newtheorem{Lemma}{Lemma}
\newtheorem{lemma}[Lemma]{$\mathbf{Lemma}$}

\newcounter{problem}
\newcounter{save@equation}
\newcounter{save@problem}
\makeatletter
\newenvironment{problem}
{\setcounter{problem}{\value{save@problem}}%
  \setcounter{save@equation}{\value{equation}}%
  \let\c@equation\c@problem
  \subequations
}
{\endsubequations
  \setcounter{save@problem}{\value{equation}}%
  \setcounter{equation}{\value{save@equation}}%
}

\begin{document}%%
\title{ \huge{ Advantages of  NOMA for Multi-User BackCom Networks }}

\author{ Zhiguo Ding, \IEEEmembership{Fellow, IEEE} and  H. Vincent Poor, \IEEEmembership{Life Fellow, IEEE}\thanks{
  Z. Ding and H. V. Poor are  with the Department of
Electrical Engineering, Princeton University, Princeton, NJ 08544,
USA. Z. Ding
 is also  with the School of
Electrical and Electronic Engineering, the University of Manchester, Manchester, UK (email: \href{mailto:zhiguo.ding@manchester.ac.uk}{zhiguo.ding@manchester.ac.uk}, \href{mailto:poor@princeton.edu}{poor@princeton.edu}).

}\vspace{-2em}} \maketitle
\begin{abstract}
Ambient   backscatter communication (BackCom) is faced with the challenge  that   a single BackCom device  can  occupy  multiple orthogonal resource blocks unintentionally. As a result, in order to avoid co-channel interference,    a conventional approach is to serve multiple  BackCom devices   in different time slots, which reduces both spectral efficiency and connectivity. This letter demonstrates that the use of  non-orthogonal multiple access (NOMA) can efficiently improve the system throughput and support massive connectivity in ambient BackCom networks. In particular, two   transceiver design  approaches    are developed  in the letter   to realize   different tradeoffs between system performance and complexity.    
\end{abstract}\vspace{-0.2em}

\begin{IEEEkeywords}
Non-orthogonal multiple access,  backscatter communications, space division multiple access, orthogonal frequency-division multiple access. 
\end{IEEEkeywords}
\vspace{-1.2em} 

\section{Introduction}
Recently, various  novel backscatter communication (BackCom) approaches have been developed  to support   the envisioned ultra-massive machine type communications (umMTC)\cite{you6g, 8476159,8446004,8907447,bacnomamtc}.  The key idea of these BackCom approaches is to use the signals sent by users in a legacy system for exciting  the   circuits of BackCom devices.  This type of  BackCom is featured by the challenge  that    a single BackCom device   can occupy  multiple orthogonal resource blocks unintentionally. For example, if the legacy system is based on    orthogonal frequency-division multiple access (OFDMA), a   BackCom device needs to reflect all     the legacy   signals sent at different subcarriers, which means that a signal sent by the   device  appears at  multiple subcarriers. A similar challenge also presents if   space division multiple access (SDMA) is used in the   legacy system. In order to avoid the co-channel interference between BackCom devices,   orthogonal multiple access (OMA) approaches have been conventionally used, e.g., BackCom devices are served in different time slots \cite{8476159,8446004}, which   reduces both spectral efficiency and connectivity. 

This letter is to demonstrate the advantage of using non-orthogonal multiple access (NOMA) to improve the system throughput and connectivity in BackCom networks. In particular, this letter considers a full-duplex  (FD) network, where downlink users are viewed as legacy users and multiple uplink BackCom devices  are served simultaneously in the downlink users' channels by using BackCom assisted NOMA (BAC-NOMA) \cite{bacnomamtc}. Unlike   \cite{bacnomamtc}, multiple legacy users are considered in this letter, where the use of both SDMA and OFDMA  in the legacy system is  investigated. Two   transceiver design  approaches    are proposed in the letter, where one approach can realize  the sum capacity of the multiple access channel (MAC) and the other one can be implemented in a low complexity manner. The provided simulation results demonstrate that  BAC-NOMA realizes a larger   throughput and support more devices than OMA, even if random resource allocation is used. 

\vspace{-1em}
\section{System Model}
Consider an ambient BackCom network, where SDMA is used in the legacy  system.   In particular, in the legacy system, a base station serves  $K$ downlink users,   denoted by ${\rm U}_k$, via $K$ spatial beamforming vectors, denoted by $\mathbf{w}_k$, $1\leq k \leq K$. The application of BAC-NOMA can ensure that additional  $M$ uplink BackCom devices,   denoted by ${\rm BD}_m$, $1\leq m \leq M$, are admitted to the bandwidth used by the legacy system \cite{bacnomamtc}. It is assumed that each user/device is  equipped with a single antenna,  the base station has $N$ antennas with the full-duplex capability, and each uplink device is equipped with a BackCom circuit. We note that the proposed BAC-NOMA scheme can also been applied  to the case with OFDMA based legacy systems, as shown in Section \ref{section 4}.

 Denote $x_k$ by the   signal sent by downlink user ${\rm U}_k$.  Based on the principle of SDMA,   the base station broadcasts the following superimposed signals: $
 \mathbf{s}_0 = \sqrt{P_0}\sum^{K}_{k=1}\mathbf{w}_kx_k$,
 where $P_0$ denotes the transmit power of the base station \cite{mojobabook}. Optimizing the transmission strategy of the legacy system, i.e., optimizing $\mathbf{w}_k$ and $P_0$, is beyond the scope of this letter, since  it should happen before the uplink BackCom devices are admitted via BAC-NOMA. 
 
 Each downlink user receives the following:
  \begin{align}
y_k = \mathbf{g}_k^T \mathbf{s}_0+\sum^{M}_{m=1}\sqrt{\eta_m} {g}_{m,k}\mathbf{h}_m^T \mathbf{s}_0s_m  + {n}_{k},
 \end{align}
 where $\mathbf{g}_k$ denotes the channel vector between the base station and downlink user ${\rm U}_k$, $\mathbf{h}_m$ denotes the channel vector between the base station and uplink device ${\rm BD}_m$, $g_{m,k}$ denotes the channel gain between ${\rm U}_k$ and  ${\rm BD}_m$, $\eta_m$ denotes the BackCom reflection coefficient of ${\rm BD}_m$ \cite{wongwcnc20}, $s_m$ denotes the signal sent by ${\rm BD}_m$, and $n_k$ denotes the receiver noise. For notational convenience,  it is assumed that the noise terms at different receivers  have the same power,  denoted by $\sigma^2$.

Because $s_m$ is unknown to the downlink users, the term, $I\triangleq \sum^{M}_{m=1}\sqrt{\eta_m} {g}_{m,k}\mathbf{h}_m^T \mathbf{s}_0s_m$, can be treated as interference, whose power is given by  $
\mathcal{E}_{x_k,s_m} \left\{ II^* \right\}
= P_0\sum^{M}_{m=1} {\eta_m} |{g}_{m,k}|^2
  |\mathbf{h}_m^T\mathbf{W}|^2$, 
 where $\mathcal{E}\{\cdot\}$ denotes the expectation operation,   and $\mathbf{W}=\begin{bmatrix}\mathbf{w}_1&\cdots &\mathbf{w}_K \end{bmatrix}$. Therefore, the data rate of downlink user ${\rm U}_k$ is given by
 \begin{align}\nonumber
 &R_{k}^{\rm D} = \log\left( 1+\frac{P_0|\mathbf{g}_k^T\mathbf{w}_k|^2}{P_0\sum^{K}_{i=1,i\neq k}|\mathbf{g}_k^T\mathbf{w}_i|^2 + \mathcal{E}_{x_k,s_m}  \left\{ II^* \right\}  + \sigma^2 }
 \right).
 \end{align}
 
The base station  receives the following observation:  
 \begin{align}\label{overall mac}
\mathbf{y}_{\rm BS} = \sum^{M}_{m=1}\sqrt{\eta_m}\mathbf{h}_m\mathbf{h}_m^T \mathbf{s}_0s_m + \mathbf{s}_{\rm SI} +\mathbf{n}_{\rm BS}
 \end{align}
where    $\mathbf{s}_{\rm SI}$ is assumed to be complex Gaussian distributed, i.e., $\mathbf{s}_{\rm SI}\sim CN(0, \alpha P_0\mathbf{C}_{\rm SI})$,  $\mathbf{C}_{\rm SI}$ denotes the covariance matrix of the self-interference channels and $\alpha$, $0\leq \alpha\leq 1$, indicates the amount of the residual self-interference \cite{7372448}.  

Depending on how   successive interference cancellation (SIC) is carried out, different sum rates can be realized for the BackCom devices, as shown in the following section.  

\section{Two  Approaches  with Different Tradeoffs Between Performance and Complexity }\label{subsection 3}
\subsection{Approach I - A Sum-Capacity Approaching Design}\label{subsection 3.1}
By    applying the noise pre-whitening process \cite{Goldsmith03}, i.e., applying  a detection matrix $\left(\sigma^2\mathbf{I}_N+\alpha P_0\mathbf{C}_{\rm SI}\right)^{-\frac{1}{2}} $ to $\mathbf{y}_{\rm BS}$,  the model in \eqref{overall mac} can be rewritten as follows:
 \begin{align} \label{overall macx1}
\tilde{\mathbf{y}}_{\rm BS} =& \sum^{M}_{m=1}\sqrt{\eta_m}\mathbf{h}_m^T \mathbf{s}_0\tilde{\mathbf{h}}_ms_m + \tilde{\mathbf{n}}_{\rm BS} = \tilde{\mathbf{H}} \mathbf{D}\mathbf{s} + \tilde{\mathbf{n}}_{\rm BS}
 \end{align}
where $\tilde{\mathbf{y}}_{\rm BS} =\left(\sigma^2\mathbf{I}_N+\alpha P_0\mathbf{C}_{\rm SI}\right)^{-\frac{1}{2}} \mathbf{y}_{\rm BS}$,   $\tilde{\mathbf{h}}_m=\left(\sigma^2\mathbf{I}_N+\alpha P_0\mathbf{C}_{\rm SI}\right)^{-\frac{1}{2}} \mathbf{h}_m$, $\tilde{\mathbf{H}} =\begin{bmatrix} \tilde{\mathbf{h}}_1&\cdots &\tilde{\mathbf{h}}_M\end{bmatrix}$, $\tilde{\mathbf{n}}_{\rm BS}= \left(\sigma^2\mathbf{I}_N+\alpha P_0\mathbf{C}_{\rm SI}\right)^{-\frac{1}{2}}\left(\mathbf{s}_{\rm SI} +\mathbf{n}_{\rm BS}\right)$, $\mathbf{D} $ is a diagonal  matrix with the elements on its main diagonal as $ {D}_{m,m}=\sqrt{\eta_m}\mathbf{h}_m^T \mathbf{s}_0 $, and $ \mathbf{s}=\begin{bmatrix} s_1&\cdots &s_M \end{bmatrix}^T$. %We note that  the noise covariance matrix is an identity matrix, after applying the noise whitening process, $CN(0,\mathbf{I}_N)$.   Therefore, different SIC strategies can be applied here, where we focus on QR.

By   treating   $\mathbf{D} $ as a power allocation matrix,   the system model in \eqref{overall macx1}  can be viewed as a special case of  a conventional MAC, whose sum capacity can be realized as follows \cite{Goldsmith03}. Without loss of generality, assume that    ${\rm BD}_m$'  signal is   decoded at the $m$-th stage of SIC, where it is straightforward to show that different SIC decoding orders lead to the same sum capacity. Further assume that prior to decoding ${\rm BD}_m$'  signal, the  signals from ${\rm BD}_i$, $1\leq i \leq m-1$, have been decoded correctly.  By applying $\sqrt{\eta_m}\mathbf{h}_m^H \mathbf{s}_0 ^*\tilde{\mathbf{h}}_m^H \left( \mathbf{I}_N+\sum^{M}_{i=m+1} {\eta_i}  |\mathbf{h}_i^T \mathbf{s}_0|^2 \tilde{\mathbf{h}}_i\tilde{\mathbf{h}}_i^H\right)^{-1}$ as the detector, the following data rate is achievable to ${\rm BD}_m$:
 \begin{align}\nonumber
 R_m^{\rm MAC}=&\log\det\left(\mathbf{I}_{N}+\left( \mathbf{I}_N+\sum^{M}_{i=m+1} {\eta_i}  |\mathbf{h}_i^T \mathbf{s}_0|^2 \tilde{\mathbf{h}}_i\tilde{\mathbf{h}}_i^H\right)^{-1} \right.
 \\ 
 &\left.\times {\eta_m}  |\mathbf{h}_m^T \mathbf{s}_0|^2\tilde{\mathbf{h}}_m \tilde{\mathbf{h}}_m ^H\right),
 \end{align}
 for $1\leq m \leq M-1$, and $R_M^{\rm MAC}=\log\det\left(\mathbf{I}_{N}+   {\eta_M}  |\mathbf{h}_M^T \mathbf{s}_0|^2\tilde{\mathbf{h}}_M \tilde{\mathbf{h}}_m ^H\right)$. Therefore,  the sum capacity of the uplink BackCom devices achieved by this type of SIC is given by  
 \begin{align}
 &R_{\rm sum}^{\rm MAC}   = \log\det \left( \mathbf{I}_N +  \sum^{M}_{m=1} {\eta_m}\left|\mathbf{h}_m^T \mathbf{s}_0\right|^2\tilde{\mathbf{h}}_m\tilde{\mathbf{h}}_m^H\right) .
 \end{align}
This letter considers a  problem of  throughput maximization   for the BackCom devices, which can be formulated as follows: 
  \begin{problem}\label{pb:1} 
  \begin{alignat}{2}
\underset{ \eta_m}{\rm{max}} &\quad    
R_{\rm sum}^{\rm MAC}    \label{1obj:1} \\
\rm{s.t.} & \quad  
\sum^{M}_{m=1} {\eta_m} |{g}_{m,k}|^2
  |\mathbf{h}_m^T\mathbf{W}|^2 \leq \tau_k, \quad 1\leq k \leq K\label{1st:1}
\\
& \quad 0\leq \eta_m\leq 1, \quad 1\leq m \leq M \label{1st:2},  \end{alignat}
\end{problem} 
 where  $\tau_k$ indicates the tolerable interference experienced by downlink user ${\rm U}_k$. If ${\rm U}_k$ has a target data rate, denoted by $R_k$, one choice of $\tau_k$ is given by  $\tau_k=\frac{|\mathbf{g}_k^T\mathbf{w}_k|^2 -\epsilon_k\sum^{K}_{i=1,i\neq k}|\mathbf{g}_k^T\mathbf{w}_i|^2 }{\epsilon_k}-\frac{\sigma^2}{P_0},$ and $\epsilon_k=2^{R_k}-1$. 

{\it Remark 1:} Problem \ref{pb:1} is a concave optimization problem since its objective function  is  in a log-det form and its constraints are affine \cite{S0895479896303430}. Various optimization solvers, such as Matlab fmincon, can be straightforwardly applied  to find the optimal solution of  Problem \ref{pb:1}.

{\it Remark 2:} The solution of Problem \ref{pb:1} is based on an instantaneous realization of $x_k$. Therefore,  significant system overhead can be consumed for the base station to inform the BackCom devices about the optimal choices of $\eta_m^*$. A low-complexity alternative is to use   random choices of $\eta_m$ which satisfy \eqref{1st:1} and \eqref{1st:2}. The simulation results provided in Section \ref{section simulation} show that Approach I with random $\eta_m$ can still significantly outperform OMA.

{\it Remark 3:} Another type of complexity introduced  by Approach I is explained in the following. During each SIC step,  $\sqrt{\eta_m}\mathbf{h}_m^H \mathbf{s}_0^* \tilde{\mathbf{h}}_m^H \left( \mathbf{I}_N+\sum^{M}_{i=m+1} {\eta_i}  |\mathbf{h}_i^T \mathbf{s}_0|^2 \tilde{\mathbf{h}}_i\tilde{\mathbf{h}}_i^H\right)^{-1}$ needs to be computed, where the inverse of the matrix requires   a computational complexity of $\mathcal{O}(N^3)$.   In the case of the number of antennas, $N$, is large, the computational complexity for generating  this detector    can be significant, which   motivates the low-complexity  approach  introduced in the next section.  
\vspace{-0.5em}

\subsection{Approach II - A Low-complexity QR Based Deign}\label{subsection 3.2}
Assume that the composite channel matrix $\tilde{\mathbf{H}}$ in \eqref{overall macx1} can be decomposed via QR decomposition as follows: $\tilde{\mathbf{H}}=\mathbf{Q}\mathbf{R}$, where $\mathbf{Q}$ is an $N\times N$ unitary matrix, and $\mathbf{R}$ is an $N\times M$ upper triangular matrix \cite{1285069}. The base station can  use $\mathbf{Q}^H$   as a detection matrix, which simplifies    the system model in \eqref{overall macx1}   as follows:
  \begin{align}\label{system model qr2}
\mathbf{Q}^H\tilde{\mathbf{y}}_{\rm BS}  =  \mathbf{R} \mathbf{D}\mathbf{s} + \mathbf{Q}\tilde{\mathbf{n}}_{\rm BS}. 
 \end{align}
By using the   upper triangular   structure of $\mathbf{R}$, SIC can be implemented in a low-complexity manner \cite{1285069}. In particular, during the $(M-m+1)$-th step, 
the signal from ${\rm BD}_m$ can be decoded with the following data rate:
 \begin{align}
 R_m = \log\left(1+R_{m,m}^2 {\eta_m}|\mathbf{h}_m^T \mathbf{s}_0|^2 \right),
 \end{align}
 where $R_{m,m}$ is defined similar to $D_{m,m}$.  
 To further reduce system overhead, it is ideal to formulate the resource allocation problem based on the following average  sum rate:
  \begin{align}
 \bar{R}_{\rm sum} = \sum^{M}_{m=1}\mathcal{E}_{x_k} \left\{ \log\left(1+R_{m,m}^2 {\eta_m}|\mathbf{h}_m^T \mathbf{s}_0|^2  \right)\right\}.
 \end{align}
 
 Therefore,   the considered   long-term throughput maximization    problem can  be formulated as follows:
 \begin{problem}\label{pb:2} 
  \begin{alignat}{2}
\underset{ \eta_m}{\rm{max}} &\quad    
 \bar{R}_{\rm sum} \label{1obj:2}  \quad
\rm{s.t.} & \quad  \eqref{1st:1}, \eqref{1st:1}.  \end{alignat}
\end{problem} 

We note that there is an analogy between Problem \ref{pb:2} and the one developed for cognitive MAC \cite{5290301}; however,   the explicit expression of $ \bar{R}_{\rm sum}$ can be obtained, as shown in the following.   First define    $\bar{R}_m$ as follows: 
   \begin{align}
 \bar{R}_{m} =  \mathcal{E}_{x_k} \left\{ \log\left(1+R_{m,m}^2 {\eta_m}\mathbf{s}_0^H\mathbf{h}_m^*\mathbf{h}_m^T \mathbf{s}_0  \right)\right\}.
 \end{align}
 Recall that $ \mathbf{s}_0 = \sqrt{P_0}\sum^{K}_{k=1}\mathbf{w}_kx_k$, which means that  $ \bar{R}_{m}$ can be rewritten as follows:
   \begin{align}\label{xxxx3}
 \bar{R}_{m} =  \mathcal{E}_{x_k} \left\{ \log\left(1+ P_0R_{m,m}^2 {\eta_m} \mathbf{x} ^H\mathbf{W}^H\mathbf{h}_m^*\mathbf{h}_m^T  \mathbf{W}\mathbf{x}  \right)\right\}.
 \end{align} 
A closed-form expression of $ \bar{R}_{m} $ can be found by using   the method developed in \cite{6517941}: 
    \begin{align}
 \bar{R}_{m} =&  \log(e)\mathcal{E}_{x_k} \left\{\int^{\infty}_{0} \left(\frac{e^{-t}}{t} \right.\right.
 \\\nonumber 
 &\left. \left.-\frac{1}{t} e^{ -t\left(1+P_0R_{m,m}^2 {\eta_m} \mathbf{x} ^H\mathbf{W}^H\mathbf{h}_m^*\mathbf{h}_m^T  \mathbf{W}\mathbf{x}  \right)}\right)dt\right\},
 \end{align} 
which can be simplified as follows:
      \begin{align}
 \bar{R}_{m} =&\log(e)\int^{\infty}_{0} \left(\frac{e^{-t}}{t} \right. 
 \\\nonumber 
 &\left.  -\frac{e^{-t}}{t} \mathcal{E}_{x_k} \left\{e^{ -tP_0 R_{m,m}^2 {\eta_m} \mathbf{x} ^H\mathbf{W}^H\mathbf{h}_m^*\mathbf{h}_m^T  \mathbf{W}\mathbf{x}   }\right\}\right)dt.
 \end{align} 
 Assuming that  $x_k$'s are independent and identically distributed (i.i.d.) complex Gaussian variables with mean zero and unit variance,  $ \bar{R}_{m}$ can be evaluated as follows: 
       \begin{align}\nonumber
 \bar{R}_{m} =& \log(e)\int^{\infty}_{0} \left(\frac{e^{-t}}{t}    -\frac{e^{-t}}{t \left(1+t P_0R_{m,m}^2 {\eta_m} |\mathbf{h}_m^T  \mathbf{W}|^2  \right)} \right)dt\\
\label{x3}
 =&
 \log(e) f\left( P_0R_{m,m}^2 {\eta_m} |\mathbf{h}_m^T  \mathbf{W}|^2\right), 
 \end{align}
 where $f(x)\triangleq  -e^{\frac{1}{x}}E_i\left(- \frac{1}{x}\right)$ and the last step follows from \cite[3.352.6]{GRADSHTEYN} and $E_i(\cdot)$ denotes the exponential integral function. We note that an alternative way to obtain \eqref{x3} is to treat $\mathbf{h}_m^T  \mathbf{W}\mathbf{x} $ in \eqref{xxxx3} as a complex Gaussian variable with mean zero and variance $|\mathbf{h}_m^T  \mathbf{W}|^2$. 
 
 So Problem \ref{pb:2} can be recast in the following equivalent form:  
  \begin{problem}\label{pb:3} 
  \begin{alignat}{2}
\underset{\eta_m}{\rm{max}} &\quad    
\sum^{M}_{m=1} -e^{\frac{1}{ P_0R_{m,m}^2 {\eta_m} |\mathbf{h}_m^T  \mathbf{W}|^2}} E_i\left(-\frac{1}{  P_0R_{m,m}^2 {\eta_m} |\mathbf{h}_m^T  \mathbf{W}|^2}\right) \label{1obj:3} \\
\rm{s.t.} & \quad  \eqref{1st:1}, \eqref{1st:1}. 
  \end{alignat}
\end{problem} 
Although Problem \ref{pb:3} contains the exponential integral function, it is still  concave   as shown in  the following lemma.
\begin{lemma}\label{lemma1}
  Problem \ref{pb:3} is a concave optimization problem. 
\end{lemma}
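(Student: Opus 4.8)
The plan is to verify the two defining features of a concave (maximization) program: a convex feasible set and a concave objective. The feasible set is cut out by the constraints \eqref{1st:1} and \eqref{1st:2}, both of which are affine (indeed linear) in the decision variables $\eta_m$, so it is a convex polytope and needs no further argument. The entire burden therefore falls on establishing concavity of the objective in \eqref{1obj:3}.

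Because that objective is a separable sum, $\sum_{m=1}^{M}\log(e)\,f\!\left(c_m\eta_m\right)$ with $f(x)\triangleq-e^{1/x}E_i(-1/x)$ and $c_m\triangleq P_0R_{m,m}^2|\mathbf{h}_m^T\mathbf{W}|^2>0$, I would reduce the whole claim to a single-variable statement: that $f$ is concave on $(0,\infty)$. Granting this, each summand $f(c_m\eta_m)$ is the composition of a concave function with the positive scaling $\eta_m\mapsto c_m\eta_m$, hence concave in $\eta_m$ and a fortiori in the full vector $(\eta_1,\dots,\eta_M)$; a nonnegatively weighted sum of concave functions is again concave, and the constant factor $\log(e)$ is irrelevant. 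This is routine convex-analysis bookkeeping once the single-variable fact is in hand.

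For that single-variable fact I would avoid differentiating the exponential integral head-on and instead exploit the integral representation already obtained in \eqref{x3}, namely $f(x)=\int_0^{\infty}\left(\frac{e^{-t}}{t}-\frac{e^{-t}}{t(1+tx)}\right)dt=\int_0^{\infty}\frac{x\,e^{-t}}{1+tx}\,dt$. For each fixed $t>0$ the integrand equals $e^{-t}$ times $\psi_t(x)=\frac{x}{1+tx}$, and a direct computation gives $\psi_t''(x)=-\frac{2t}{(1+tx)^3}<0$ for all $x>0$, so $\psi_t$ is concave. Thus, for every $t$, the integrand is a nonnegative multiple of a concave function of $x$; integrating against the nonnegative measure $dt$ preserves concavity, so $f$ is concave on $(0,\infty)$, which closes the argument.

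The only genuinely non-routine step is this last one. A frontal attack computing $f'$ and $f''$ directly from $-e^{1/x}E_i(-1/x)$ is possible but entangles the derivatives of $E_i$ and invites sign-tracking errors. Recognizing instead that the representation $f(x)=\int_0^{\infty}\frac{x\,e^{-t}}{1+tx}\,dt$ is already available from the derivation leading to \eqref{x3}, and that concavity transfers through the integral sign, is what makes the proof short; everything else (affineness of the constraints, composition with a positive affine map, and summation over $m$) is standard.
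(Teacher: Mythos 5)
Your proof is correct, but it takes a genuinely different route from the paper's. The paper attacks $f(x)=-e^{1/x}E_i(-1/x)$ head-on: it quotes the first derivative from \cite{bacnomamtc}, computes $f''(x)=e^{x^{-1}}x^{-4}u(x)$ with $u(x)=-E_i(-x^{-1})-xe^{-x^{-1}}-2xE_i(-x^{-1})-x^2e^{-x^{-1}}$, shows $u'(x)=-2g(x)\leq 0$ by invoking the auxiliary fact $g(x)=E_i(-x^{-1})+xe^{-x^{-1}}\geq 0$ (itself proved in \cite{bacnomamtc}), and concludes $u(x)\leq u(0)=0$. Your argument instead reads off the representation $f(x)=\int_0^{\infty}\frac{x\,e^{-t}}{1+tx}\,dt$ from the derivation preceding \eqref{x3}, observes that $x\mapsto\frac{x}{1+tx}$ has second derivative $-\frac{2t}{(1+tx)^3}<0$ for each fixed $t>0$, and uses the fact that concavity survives integration against a nonnegative measure. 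This is shorter, entirely self-contained (it does not lean on the external inequality $g(x)\geq 0$), and sidesteps the sign-tracking in the derivatives of $E_i$; what it gives up is the explicit expression for $f''$, which the paper's computation yields as a byproduct and which could be useful for second-order numerical methods. The surrounding bookkeeping --- affine constraints, composition with the positive scaling $\eta_m\mapsto c_m\eta_m$, and summation over $m$ --- matches the paper's and is handled correctly.
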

\begin{proof}
See Appendix \ref{proof1}.
\end{proof}

  {\it Remark 4:} Because Problem  \ref{pb:3}  is concave, it can be straightforwardly solved by using various optimization solvers. We note that the computation of the   function $f(x)$ in \eqref{x3} can be difficult even for a moderately small $x$. For example, for $x=0.0013$, $\frac{1}{x}=750$, and Matlab returns $f(750)={\rm Inf}$. To overcome this computational issue, the following approximation of $f(x)$ is used for small $x$. 
 Recall that $-E_i\left(- \frac{1}{x}\right)=\Gamma\left(0,\frac{1}{x}\right)$, where $\Gamma(\beta, y)$ denotes the incomplete gamma function and can be approximated as follows \cite{GRADSHTEYN}: 
 \begin{align}
 \Gamma(\beta, y) \approx y^{\beta-1}e^{-y} \sum^{L-1}_{m=0}\frac{(-1)^m\Gamma(1-\beta+m)}{y^m\Gamma(1-\beta)},
 \end{align}
 for $y\rightarrow \infty$. 
 By letting $L=1$ and $
 \beta=0$, $ f(x) $ can be approximated as follows:
  \begin{align}
f(x)=&  -e^{\frac{1}{x}} \Gamma\left(0,\frac{1}{x}\right)  \approx    x  , \quad x\rightarrow 0,
 \end{align}
which can be used to approximate   \eqref{1obj:3}.
 
 {\it Remark 5:} Compared to Approach I, the QR based design can be implemented with  low computational complexity, as explained in the following. First, there is no need to calculate the inverse of a matrix with size of $N$ at each SIC step. Second, the resource allocation solution is not based on the instantaneous realizations of $x_k$, which reduces the system complexity.  However,  it is worth to point out that, unlike Approach I,    Approach II cannot achieve the sum capacity of MAC, or support   the overloading case, i.e., $M>N$.

 \section{ Extension to OFDMA-Based Legacy Systems } \label{section 4}
 The aforementioned BAC-NOMA scheme can also be applied to the case,  where OFDMA is used in the legacy  system. In particular, in the considered legacy system, a base station serves  $K$ downlink users, each denoted by ${\rm U}_k$, $1\leq k \leq K$, via $K$ orthogonal OFDMA subcarriers. If OMA is used, all the $K$ subcarriers will be occupied by a single BackCom device, since   a signal  reflected by one BackCom device can block  all subcarriers.    The application of BAC-NOMA can ensure that    $M$ uplink BackCom devices  are simultaneously admitted to share the subcarriers. For   the purpose of illustration,  it is assumed that each node is  equipped with a single antenna.  
 
Without loss of generality, assume that    downlink user ${\rm U}_k$ is served at  the $k$-th subcarrier.  
Following the  ambient BackCom model in \cite{8476159,8446004}, at subcarrier $k$,  the frequency-domain baseband signal received by  downlink user ${\rm U}_k$ is given by
  \begin{align}\nonumber
y_k^{\rm D} =  \sqrt{P_0}{G}_{k} x_k+\sqrt{P_0}\sum^{M}_{m=1}\sqrt{\eta_m} {G}_{m,k}H_{m,k}x_k  s_m  + {n}^{\rm D}_{k},
 \end{align}
 where $ {G}_k$ denotes the channel gain between the base station and   ${\rm U}_k$ at subcarrier $k$, $ {H}_{m,k}$ denotes the forward channel gain from the base station to uplink device ${\rm BD}_m$ at subcarrier $k$, $G_{m,k}$ denotes the channel gain between ${\rm U}_k$ and  ${\rm BD}_m$ at subcarrier $k$, and   $n_k^D$ denotes the receiver noise.

Because $s_m$ is unknown to the downlink users, the term, $I_o\triangleq \sqrt{P_0}\sum^{M}_{m=1}\sqrt{\eta_m} {G}_{m,k}^kH_{m,k}x_k  s_m $, is again treated  as interference. The   power of this interference term  is given by $
\mathcal{E}_{x_k,s_m} \left\{I_oI_o^*\right\}
= P_0  \sum^{M}_{m=1} {\eta_m} |{G}_{m,k}|^2|H_{m,k}|^2 $.
 Therefore, the data rate of downlink user ${\rm U}_k$ is given by
 \begin{align}
 &R_{k}^{\rm D} = \log\left( 1+\frac{P_0| {G}_k|^2  }{P_0\sum^{M}_{m=1} {\eta_m} |{G}_{m,k}|^2|H_{m,k}|^2   +\sigma^2}
 \right).
 \end{align}
 
At the base station, the frequency-domain baseband signal at the $k$-th subcarrier is given by 
 \begin{align}\label{overall macz}
 {y}^{\rm BS}_k = \sqrt{P_0}\sum^{M}_{m=1}\sqrt{\eta_m}{F}_{m,k} {H}_{m,k}  x_k s_m +  {s}^{\rm SI}_k + {n}^{\rm BS}_k,
 \end{align}
where  $F_{m,k}$ denotes ${\rm BD}_m$'s backward channel gain  at subcarrier $k$,   ${s}^{\rm SI}_k $ denotes the self-interference and $ {n}^{\rm BS}_k$ denote the noise. As in  the previous section, it is assumed that   ${s}^{\rm SI}_k  \sim CN(0, \alpha P_0 |h_{\rm SI}^k|^2)$, where $h_{\rm SI}^k$ denotes  the self-interference channel. Furthermore, it is assumed that self-interference at different subcarriers is  independent.

By   applying the pre-whitening process, the system model at the base station can be expressed  as follows:
 \begin{align}\label{overall macz2}
  \breve{y}^{\rm BS}_k =\sqrt{P_0} x_k  \breve{\mathbf{h}}_k^H  \boldsymbol \eta^{\frac{1}{2}} \mathbf{s}  +   \breve{n}^{\rm BS}_k,
 \end{align}
 where $ \breve{y}^{\rm BS}_k  =  \left(\alpha P_0 |h_{\rm SI}^k|^2+\sigma^2\right)^{-\frac{1}{2}} {y}^{\rm BS}_k $, $\breve{\mathbf{h}}_k=\left(\alpha P_0 |h_{\rm SI}^k|^2+\sigma^2\right)^{-\frac{1}{2}}\begin{bmatrix}{F}_{1,k} {H}_{1,k}&\cdots &{F}_{M,k} {H}_{M,k} \end{bmatrix}^H$, $\mathbf{s}=\begin{bmatrix}{s}_{1}  &\cdots &{s}_{M }   \end{bmatrix}^T$, ${\boldsymbol \eta}$ is an $M\times M$ diagonal matrix, i.e., ${\boldsymbol \eta}=\text{diag}\{\eta_1, \cdots, \eta_M\} $, and $\breve{n}^{\rm BS}_k$ is a complex Gaussian white noise with mean zero and unit variance. Stacking the $K$ observations in one vector, the system model at the base station can be rewritten as follows:
  \begin{align}\label{overall macz3}
 \breve{ \mathbf{y}}^{\rm BS} = \sqrt{P_0} \mathbf{D}_x \breve{\mathbf{H}}^H  \boldsymbol \eta^{\frac{1}{2}} \mathbf{s}   +\breve{\mathbf{n}}^{\rm BS},
 \end{align}
 where $\mathbf{D}_x $ is an $K\times K$ diagonal matrix, i.e., $\mathbf{D}_x=\text{diag}\{x_1, \cdots, x_K\} $, $\breve{\mathbf{H}} = \begin{bmatrix}\breve{\mathbf{h}}_1 &\cdots &\breve{\mathbf{h}}_K \end{bmatrix}$, $  \breve{\mathbf{y}}^{\rm BS}= \begin{bmatrix}\breve{y}^{\rm BS}_1 &\cdots &\breve{y}^{\rm BS}_K \end{bmatrix}^T$,  and $\breve{\mathbf{n}}^{\rm BS}$ is constructed similarly to $ \breve{\mathbf{y}}^{\rm BS}$. 
 
By defining $\bar{\mathbf{H}}= \sqrt{P_0}\mathbf{D}_x \breve{\mathbf{H}}^H $ and treating $\boldsymbol \eta^{\frac{1}{2}} $ as a power allocation matrix,  one can view   the system model in \eqref{overall macz3} as a special case of   conventional MAC, whose sum capacity can be realized as follows.  Without loss of generality, assume that    ${\rm BD}_m$'  signal is   decoded at the $m$-th stage of SIC, where it is  noted that  that different SIC decoding orders lead to the same sum capacity. Further assume that prior to decoding ${\rm BD}_m$'  signal, the  signals from ${\rm BD}_i$, $1\leq i \leq m-1$, have been decoded correctly.  By applying $\bar{\mathbf{h}}_m^H\left( \mathbf{I}_K+\sum^{M}_{i=m+1} {\eta_i}    \bar{\mathbf{h}}_i\bar{\mathbf{h}}_i^H\right)^{-1} $ as the detector, the following data rate is achievable to ${\rm BD}_m$:
 \begin{align}\nonumber
 &R_m^{\rm MAC}=\\\nonumber &\log\det\left(\mathbf{I}_{K}+\left( \mathbf{I}_K+\sum^{M}_{i=m+1} {\eta_i}    \bar{\mathbf{h}}_i\bar{\mathbf{h}}_i^H\right)^{-1}   {\eta_m}  \bar{\mathbf{h}}_m \bar{\mathbf{h}}_m ^H\right),
 \end{align}
 for $1\leq m \leq M-1$, and $R_M^{\rm MAC}=\log\det\left(\mathbf{I}_{K}+   {\eta_M}  \bar{\mathbf{h}}_M \bar{\mathbf{h}}_m ^H\right)$, where $\bar{\mathbf{h}}_i$ denotes the $i$-th column of $ \bar{\mathbf{H}}$. Therefore,  the sum capacity  of the uplink BackCom devices  is given by 
 \begin{align}
 &R_{\rm sum}^{\rm MAC}  =\log\det \left( \mathbf{I}_K + \sum^{M}_{m=1}   {\eta_m}  \bar{\mathbf{h}}_m \bar{\mathbf{h}}_m ^H\right) ,
 \end{align}
 which can be used to formulate the following throughput maximization problem: 
  \begin{problem}\label{pb:5} 
  \begin{alignat}{2}
\underset{ \eta_m}{\rm{max}} &\quad    
R_{\rm sum}^{\rm MAC}    \label{1obj:5} \\
\rm{s.t.} & \quad  
   \sum^{M}_{m=1} {\eta_m} |{G}_{m,k}|^2|H_{m,k}|^2\leq \tau_k, \quad 1\leq k \leq K \\\nonumber &\quad \eqref{1st:2}.  \end{alignat}
\end{problem}    
Similar to   \eqref{1obj:1},   \eqref{1obj:5} is also in the concave log-det form.   Therefore, Problem \ref{pb:5} is also concave and hence can be straightforwardly solved.

 \begin{figure}[t]\centering \vspace{-0.5em}
    \epsfig{file=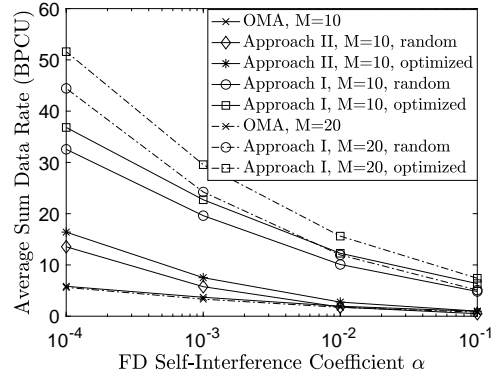, width=0.35\textwidth, clip=}\vspace{-0.5em}
\caption{ The impact of FD self-interference on the performance of the considered transmission schemes. BPCU denotes bits per channel use.  Case I is used for the locations of the downlink users  and $K=N=10$.      }\label{fig1}\vspace{-1.5em}
\end{figure}
\vspace{-1em}
\section{Simulation Results}\label{section simulation}
In this section, simulation results are presented to evaluate the performance of the proposed BAC-NOMA schemes.  For   the presented results,    the path loss exponent is set as $3$,   $P_0=20$ dBm,  $\tau_k=0.01$, $\sigma_n^2=-94$ dBm, and the base station is located at $(0,0)$ m.    The uplink BackCom devices are randomly located inside a square of side $6$ m whose center is located at $(0, 0)$ m. Two cases, termed Cases I and II, are considered for the locations of the downlink users. For Case I, the downlink users are   randomly located inside a square of side $6$ m with its center located at   $(0, 0)$ m, whereas for case II, the users are also   in the same size square with its center at  $(3, 0)$ m. 

In Fig. \ref{fig1}, the impact of FD self-interference on the performance of the considered transmission schemes is studied. As can be seen from the figure, the NOMA schemes can outperform the OMA scheme, since multiple uplink BackCom devices can be simultaneously supported by the NOMA schemes. Furthermore, the figure shows that Approach I outperforms Approach II and also is applicable to the overloading cases, i.e., $M>N$. However,   it is worth to point out that the performance gain of Approach I is obtained at a price of more system complexity, as discussed in Remarks 2 and 3.   In addition,  Fig. \ref{fig1} shows  that increasing $\alpha$ decreases the performance of all schemes, which is due to the fact that a larger $\alpha$ means more residual FD self-interference and hence leads to more performance degradation.

In Fig. \ref{fig2}, the impact of the number of BackCom devices on the performance of the considered schemes is studied. With more BackCom devices participating in NOMA transmission, the performance of the BAC-NOMA schemes is improved, whereas the impact of   $M$ on the performance of OMA is insignificant. Another important observation from Fig. \ref{fig2} is that the performance of the schemes for Case II is better than that of Case I, which is due to the fact that the interference between the uplink and downlink users is more severe    in Case I.   As discussed in Section \ref{section 4}, the concept of BAC-NOMA can be straightforwardly extended to the case with OFDMA based legacy systems, which is demonstrated in Fig. \ref{fig3}. Because $K$ subcarriers are used in the considered legacy system, the normalized sum-rate, i.e., $\frac{R_{\rm sum}^{\rm MAC}}{K}$, is used as the metric for the performance evaluation. As can be observed from Fig. \ref{fig3}, the BAC-NOMA schemes can outperform OMA, and inviting more BackCom devices to participate  in NOMA transmission improves the throughput, which are consistent to the observations made in Figs. \ref{fig1} and \ref{fig2}.

 \begin{figure}[t]\centering \vspace{-2.5em}
    \epsfig{file=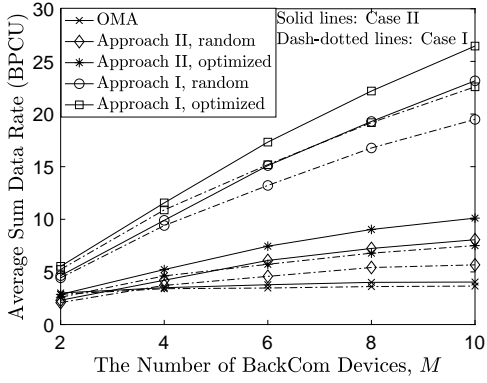, width=0.35\textwidth, clip=}\vspace{-0.5em}
\caption{ The impact of the number of BackCom devices, $M$, on  the performance of the considered transmission schemes.   $\alpha=0.001$.       }\label{fig2}\vspace{-1.5em}
\end{figure}

\vspace{-1em}
\section{Conclusions}
This letter has demonstrated the advantages   of  NOMA for ambient BackCom networks with  OFDMA and SDMA used  in legacy systems. By using BAC-NOMA, multiple BackCom devices can be served simultaneously, instead of being served in different time slots as in OMA.  Two   resource allocation approaches have been proposed in order to realize   different tradeoffs between system performance and complexity.   
\vspace{-1em}

\appendices 
  \section{Proof for Lemma \ref{lemma1}}\label{proof1}
 Recall that all the constraints of Problem \ref{pb:3}  are affine, and therefore the lemma can be proved by showing that the objective function of Problem \ref{pb:3}  is concave.   
Recall that $f(x)=  -e^{\frac{1}{x}}E_i\left(- \frac{1}{x}\right)$, $x\geq0$. The concavity of the objective function in \eqref{1obj:3} can be proved by showing that $f(x)$ is a concave function. Recall that the first order derivative of  $f(x)$ has been obtained in \cite{bacnomamtc} as follows:
\begin{align}  
f'(x)  \label{first orderdd}
=&  e^{x^{-1}} x^{-2}E_i\left(- x^{-1}\right) +x^{-1}.
\end{align}
By using \eqref{first orderdd} and with  some algebraic manipulations, the second order derivative $f''(x)$ can be expressed as follows:
\begin{align}   \label{first orderdd3}
&f''(x)=  e^{x^{-1}} x^{-4}\\\nonumber
&\times \left( -  E_i\left(- x^{-1}\right) -xe^{-x^{-1}}   -2x E_i\left(- x^{-1}\right)    -x^2 e^{-x^{-1}}\right). 
\end{align}
Define $u(x) =  -  E_i\left(- x^{-1}\right) -xe^{-x^{-1}}   -2x E_i\left(- x^{-1}\right)    -x^2 e^{-x^{-1}}$. In order to show $f''(x)\leq 0$, it is sufficient to show that $u(x)\leq 0$, since $x\geq 0$.
The first order derivative of $u(x)$ is given by
\begin{align}  
u'(x) \nonumber
=&   e^{-x^{-1}}x^{-1} -e^{-x^{-1}} - e^{-x^{-1}}x^{-1}   -2 E_i\left(- x^{-1}\right)  \\\nonumber&   +2x  e^{-x^{-1}}x^{-1} -2x e^{-x^{-1}}   - e^{-x^{-1}} 
\\\label{first orderdd32}
=&        -2 E_i\left(- x^{-1}\right)      -2x e^{-x^{-1}}     ,
\end{align}
where the first step follows from the fact that $\frac{dE_i\left(-x^{-1}\right)}{dx} = -\frac{e^{-x^{-1}}}{x}$. 

Define $g(x) \triangleq  E_i\left(- x^{-1}\right) +xe^{-x^{-1}}$, and  $u'(x)$ can be shown as a function of $g(x)$ as follows: 
\begin{align}  
u'(x)  \label{first orderdd33}
=&        -2 \left( E_i\left(- x^{-1}\right)      +x e^{-x^{-1}}  \right)   =-2g(x).
\end{align}
In \cite{bacnomamtc}, it is shown that $g(x)\geq 0$ for $x\geq 0$. Therefore, $u'(x) \leq 0$ for $x\geq 0$, i.e., $u(x)$ is a monotonically decreasing function of $x$, which leads to the following inequality:
\begin{align}\nonumber
u(x)&= -  E_i\left(- x^{-1}\right) -xe^{-x^{-1}}   -2x E_i\left(- x^{-1}\right)    -x^2 e^{-x^{-1}}\\   &\leq u(0)=0.
\end{align}
Because $u(x)$ is non-positive, the second order derivative of $f(x)$ is also non-positive.
Therefore, $f(x)$ is concave and hence the objective function of Problem \ref{pb:3} is also concave since a non-negative weighted sum of concave functions is still concave.  The proof for the lemma is complete. 

  \begin{figure}[t]\centering \vspace{-2.5em}
    \epsfig{file=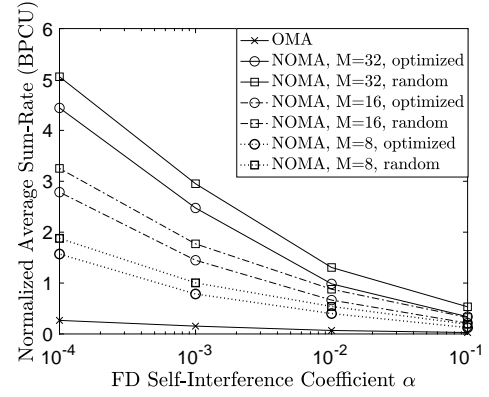, width=0.34\textwidth, clip=}\vspace{-0.5em}
\caption{ The performance of BAC-NOMA in OFDMA based legacy systems.   Case I is used for the locations of the downlink users.  $N=K=16$.      }\label{fig3}\vspace{-1.5em}
\end{figure}
\vspace{-1em}
\bibliographystyle{IEEEtran}
\bibliography{IEEEfull,trasfer}

  \end{document}